\documentclass[11pt]{article}

\usepackage[margin=1in]{geometry} 


\usepackage[T1]{fontenc}
\usepackage{textcomp}
\usepackage{palatino}
\usepackage{mathpazo}
\usepackage{stmaryrd}


\usepackage{hyperref}
\hypersetup{colorlinks=true,citecolor=blue,pdfpagemode=UseNone}


\usepackage{amsfonts}
\usepackage{amssymb}
\usepackage{amsmath}
\usepackage{latexsym}
\usepackage{amsthm}
\usepackage{sectsty}
\usepackage[usenames]{color}
\usepackage{authblk} 
\usepackage{algorithm}
\usepackage{algorithmic}


\newtheorem{theorem}{Theorem}
\newtheorem{lemma}[theorem]{Lemma}

\theoremstyle{definition}
\newtheorem{definition}[theorem]{Definition}
\newtheorem{claim}[theorem]{Claim}

\newtheorem{remark}[theorem]{Remark}


\newcommand{\tinyspace}{\mspace{1mu}}
\newcommand{\microspace}{\mspace{0.5mu}}

\newcommand{\norm}[1]{\left\lVert\tinyspace#1\tinyspace\right\rVert}

\def\({\left(}
\def\){\right)}

\def \lket {\left|}
\def \rket {\right\rangle}
\def \lbra {\left\langle}
\def \rbra {\right|}
\newcommand{\ket}[1]{\lket\microspace #1 \microspace\rket}
\newcommand{\bra}[1]{\lbra\microspace #1 \microspace\rbra}


\newcommand{\bool}{\{0,1\}}

\newcommand{\kera}[1]{\ket{#1}\bra{#1}}
\newcommand{\bret}[2]{\langle{#1}|{#2}\rangle}
\newcommand{\veps}{\varepsilon}
\newcommand{\hybs}[2]{\ket{#1^{(#2)}}} 
\newcommand{\td}[2]{\textsf{td}\left(#1,#2\right)}
\DeclareMathOperator{\adv}{\mathbf{Adv}}
\begin{document}

\title{\bf Early days following Grover's quantum search algorithm}
\author[1]{Fang Song}
\affil[1]{Portland State University \texttt{fang.song@pdx.edu}}

\date{\today}

\maketitle

\begin{abstract}   
  This is a note accompanying
  \href{http://fangsong.info/teaching/s17_4510_qc/}{\textsc{``CS
      410/510: Intro to quantum computing''}} I taught at Portland
  State University in Spring 2017. It is a review and summary of some
  early results related to Grover's quantum search algorithm in a
  consistent way. I had to go back and forth among several books,
  notes and original papers to sort out various details when preparing
  the lectures, which was a pain. This is the motivation behind this
  note. I would like to thank Peter H{\o}yer for valuable feedback on
  this note.
\end{abstract}

\section{Introduction} \label{sec:introduction}

This note has a simple purpose to serve: to sort out some of the early
literature around Grover's quantum search algorithm and to put them in
context. If you note something interesting and/or significant that is
worth adding or something incorrect here, please drop me an email. In
the following, I will describe

\begin{itemize}
\item Quantum search algorithm when there are multiple marked items
  with or without the number of marked items
  known. (Section~\ref{sec:msearch})
\item Quantum counting, a nice application of Kitaev's phase
  estimation technique. This also gives an alternative approach to
  searching with {unknown} number of
  solutions. (Section~\ref{sec:aaqc})
\item A hybrid argument showing the optimality of Grover's algorithm
  (based on~\cite{BBBV97}). This is pretty standard, but I want to
  point out a minor tweak that shows a stronger claim: unstructured
  search, in fact the \emph{decision} version, is hard even on
  \emph{average}, strengthening the worst-case hardness which is
  commonly seen in the literature. (Section~\ref{sec:lb})
\end{itemize}

\paragraph{Overview.} Here is my narrative of the time-line based on
my reading on early papers and information I collected elsewhere
(books, lecture notes, and conversations with other researchers
including some authors of these work).

I will start the story with Grover's quantum algorithm for finding a
marked item in a dataset, which achieves quadratic speedup over
optimal classical algorithms in the query model. It was first
published in STOC'96~\cite{Grover96}, a prestigious conference in
theoretical computer science, under the relatively plain title ``A
fast quantum mechanical algorithm for database search''. He then
worked out another version with the likely more popular title
``Quantum mechanics helps in searching for a needle in a haystack''
that was geared towards physicists and appeared in
Phys. Rev. Lett.~\cite{Grover97}. Note that a couple of years ago,
Shor proposed his famous quantum factorization algorithm. Kitaev on
the other continent (Russia) apparently heard about Shor's result but
was not able to get a copy of Shor's paper back then. He instead
reproduced Shor's results on his own by a different approach, based on
his elegant and powerful tool of \emph{phase estimation}~\cite{Kit95},

Several papers followed Grover's work immediately. Boyer, Brassard,
H{\o}yer, and Tapp (these names will appear many times) first gave a
finer analysis of Grover's algorithm. They also extended it to the
setting of \emph{multiple} marked items for both cases that the number
of solutions are \emph{known} and
\emph{unknown}~\cite{BBHT96}. Applications of Grover's algorithm also
came out soon, such as quantum algorithms finding the minimum in a
dataset~\cite{DH96} and for finding collisions~\cite{BHT97}. Another
important application, \emph{quantum counting}, was already outlined
in~\cite{BBHT96}, and its full description appeared a little
later~\cite{BHT98}. The algorithm combines Grover's search algorithm
and Shor's order finding algorithm in a clever way, but the analysis
was a bit complicated. They also phrased Grover's search algorithm in
the more general framework of \emph{amplitude amplification}, which
was formalized in an earlier work by Brassard and
H{\o}yer~\cite{BH97_exact}.

Meanwhile, Cleve, Ekert, Macchiavello and Mosca were unwrapping and
extending Kitaev's work, and they rephrased many early algorithms
(Deutsch, order finding, etc.) under the \emph{phase estimation}
framework~\cite{CEMM98}. It turned out that quantum counting can be
easily described and understood under the phase-estimation framework
(in fact the proceedings version of~\cite{BHT98} already indicated
this easier perspective without providing a complete analysis in their
original approach). With Mosca, Brassard, H{\o}yer and Tapp completed
a full version~\cite{BHMT02}. This basically culminated the early
stage of developments related to quantum search.

As to the \emph{optimality} of Grover's search algorithm, an early
manuscript in 1994 by Bennett, Bernstein, Brassard and
Vazirani~\cite{BBBV97} actually preceded Grover's paper. They
systematically studied the strengths and limits of quantum
algorithms. In particular, they employed a nice tool - \emph{hybrid
  argument} to prove a lower bound on the necessary number of quantum
queries to solve the search problem with bounded error. They also
proposed using majority voting to amplify the success probability of a
quantum algorithm, but this does not achieve the quadratic speedup in
quantum amplitude amplification in~\cite{BHMT02}. Apparently, Grover
was not aware of this work by Bennett et al. when he worked out his
quantum search algorithm. \cite{BBHT96} noticed that the constant in
the lower bound of~\cite{BBBV97} was lower than that in Grover's
algorithm, and they improved it to an almost tight bound. Exact
optimality was later proved by Zalka~\cite{Zalka99} (see~\cite{DH09}
for a more intuitive proof).

Now let's get to the technical meat.

\section{Search with multiple marked items} \label{sec:msearch}

Let's set up Grover's search problem in the standard way. The
presentation here is adapted from Watrous's lecture
note\footnote{\url{https://cs.uwaterloo.ca/~watrous/CPSC519/LectureNotes/12.pdf}}. Let
$f:\bool^n \to \bool$ be a function\footnote{We consider $\bool^n$ for
  the sake of simplicity. For a domain $[N]$, we will need quantum
  Fourier transform on $Z_N$ rather than the simple Hadamard gate.}
and let $N = 2^n$. We are given access to the oracle $O_f$ that
implements $f$ as an unitary as usual:
\begin{equation*}
  \ket{x}\ket{y}\stackrel{O_f}{\mapsto} \ket{x}\ket{y\oplus f(x)} \, .
\end{equation*}

Define the two sets representing the ``marked'' and ``unmarked''
items:

\begin{equation*}
  A:= \{x\in\bool^n: f(x) = 1\}, \quad B: = \{x \in \bool^n: f(x) =
  0\} \, .
\end{equation*}

Then let $a = |A|$ and $b = |B|$ and define two orthonormal states:
\begin{equation*}
  \ket{A}: = \frac{1}{\sqrt a} \sum_{x\in A} \ket{x}, \quad \ket{B} :
  = \frac{1}{\sqrt b} \sum_{x\in B} \ket{x} \, .
\end{equation*}

Grover's algorithm start by preparing a state in uniform superposition
\begin{equation*}
  \ket{h} = H^{\otimes n} \ket{0^n} = \sum_{x \in \bool^n}\ket{x} \, .
\end{equation*}
Two \emph{reflection} operations are at the heart of Grover's
algorithm:
\begin{equation*}
  R_h= -H^{\otimes n} Z_0 H^{\otimes n} \quad \&  \quad R_B:= Z_f 
\end{equation*}
where
\begin{equation*}
  Z_0 \ket{x} :=\left\{
    \begin{array}{l l}
      -\ket{x} & \text{ if } x = 0^n\\
      \ket{x} & \text{ if } x\neq 0^n 
    \end{array} \right.
  \quad \&  \quad Z_f\ket{x}:= (-1)^{f(x)} \ket{x} \, . 
\end{equation*}
Note that $Z_f$ is just the ``phase''-oracle of $f$ that computes
$f(x)$ in the phase. It can be implemented from $O_f$ with one
auxiliary qubit by the standard ``phase-kick-back'' trick. $Z_0$ can
be implemented in a similar fashion (i.e., phase oracle for function
$\delta_{0^n}$.)

Consider the two dimensional plane spanned by $\{\ket{A},
\ket{B}\}$. Observe that on this plane
\begin{itemize}
\item $R_h = -H^{\otimes n} Z_0 H^{\otimes n}$: reflects a state about
  $\ket{h}$. This can be seen by noting that $H$ brings
  $\{\ket{h},\ket{h^\perp}\}$ to
  $\{\ket{0}, \ket{0^\perp} : = H^{\otimes n} \ket{h^\perp}\}$, where
  $\ket{h^\perp}$ denotes the state orthogonal to $\ket{h}$ on the
  plane of $\{\ket{A},\ket{B}\}$, and $ - Z_0$ flips the sign of all
  standard basis vectors except $\ket{0^n}$. Algebraically,
  $R_h = H^{\otimes n}(- Z_0) H^{\otimes n} = 2\kera{h} - I$. Grover
  interpreted $R_h$ as ``inversion about the mean'', which gives an
  intuitive explanation how the amplitute on the marked item
  grows. $R_h$ is sometimes also called the Grover \emph{diffusion}
  operator, probably preferred by physicists.
\item $R_B = Z_f$: reflects a state about $\ket{B}$ in the plane
  defined by $\{\ket{A},\ket{B}\}$.
\end{itemize}

The composition $G: = R_fR_B = - H^{\otimes n} Z_0 H^{\otimes n} Z_f$
thus rotates a state towards $\ket{A}$ by $2\theta_a$ (Exercise:
verify it pictorially) where
$\theta_a : = \sin^{-1}(\bret{h}{A}) = \sin^{-1}(\sqrt{\frac{a}{N}})$.

{Grover's quantum search algorithm} is then easy to describe

\begin{algorithm}
  \caption{{Grover's quantum search algorithm (number of solutions
      \textbf{known})}}
\label{alg:grover}
\begin{algorithmic}[1]

  \REQUIRE $O_f$ with $f(x) = 1$ iff. $x\in A$. $\lambda = 6/5$. 
  
  \ENSURE $x\in A$, a marked item. 

  \STATE Initialization: $\ket{h}:= H^{\otimes n} \ket{0^n} = \sum_{x \in
    \bool^n} \ket{x}$.

  \STATE \textbf{Iteration}: apply $G = (-H^{\otimes n} Z_0 H^{\otimes
    n}) Z_f$ on $\ket{h}$
  $k$ times (the number of iterations $k$ is crucial and we will
  specify it later). 

  \STATE Measure and obtain candidate solution $x$.

\end{algorithmic}
\end{algorithm}

The effect of $G$ gives the very intuitive geometric interpretation of
Grover's algorithm: starting from $\ket{h}$, each iteration (i.e.,
application of $G$) rotates the current state by $2\theta_a$ towards
$\ket{A}$. After sufficiently many iterations, we'd hope that we are
close to $\ket{A}$ enough so we are likely to measure an element in
$A$, More precisely
\begin{eqnarray*}
  \ket{h} &=& \sin\theta_a \ket{A} + \cos\theta_a \ket{B}, \\
  G^k \ket{h} &=& \sin((2k+1)\theta_a) \ket{A} + \cos((2k+1)\theta_a)
  \ket{B}  \, .
\end{eqnarray*}

How many iterations are sufficient? If we know $a$, the number of
solutions, then it is easy to decide. We have
\begin{equation*}
  \gamma_k:=\Pr[\text{finding an $x\in A$ after $k$ iterations}] =
  \left|\bra{A}G^k\ket{h}\right|^2 = \sin^2((2k+1)\theta_a) \, .
\end{equation*}

We would like to have $(2k+1)\theta_a$ as close to $\pi/2$ as
possible. Let us pick
\begin{equation*}
k^*:= \lfloor \frac{\pi/2 - \theta_a}{2\theta_a}\rfloor \, ,
\end{equation*}
and let $\eta: = \pi/2 - (2k^*+1)\theta_a$ ($|\eta| \leq
\theta_a$). Then one finds an $x\in A$ successfully with probability
at least
\begin{equation*}
  \gamma_k = \sin^2((2k+1)\theta_a) = \sin^2(\pi/2 - \eta) = \cos^2(\eta) \geq \cos^2(\theta_a) \geq
  1- \frac{a}{N} \, .
\end{equation*}

If we just repeat the entire algorithm a few times, we can amplify the
success probability close to 1. Hence the number of queries we need is
\begin{equation*}
  O(k^*) = O(\frac{\pi}{4\theta_a}) \leq O(\frac{1}{\sin\theta_a}) =
  O(\sqrt{\frac{N}{a}}) \,.
\end{equation*}

\subsection{Number of marked items unknown}
\label{sec:unknown}

We can pick $k$ according to $\theta_a$ when we know $a$. But what if
we do not know $a$, the number of marked items? This is answered
in~\cite{BBHT96}. We need a simple but crucial lemma.

\begin{lemma} Let $\theta_a$ be as before (i.e.,
  $\sin^2(\theta_a) = a/N$. Let $m$ be an integer and pick
  $k\gets \{0,\ldots, m-1 \}$ uniformly at random. Then after
  applying $G$ on $\ket{h}$ $k$ times, the probability of measuring an
  $x\in A$ is
  \begin{equation*}
    P_m    = \frac{1}{2} - \frac{\sin(4m\theta_a)}{4m\sin(2\theta_a)}
    \, .
  \end{equation*}
  In particular, when $m \geq 1/{\sin(2\theta_a)}$,
  $P_m\geq \frac 1 4$.
  \label{lemma:rit}
\end{lemma}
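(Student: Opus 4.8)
The plan is to compute $P_m$ directly by averaging the success probability $\gamma_k$ over the uniform choice of $k \in \{0,\ldots,m-1\}$, and then to reduce the resulting trigonometric sum to a closed form. From the earlier analysis we already have the exact per-iteration success probability $\gamma_k = \sin^2\bigl((2k+1)\theta_a\bigr)$, so the starting point is simply
\begin{equation*}
  P_m = \frac{1}{m}\sum_{k=0}^{m-1}\sin^2\bigl((2k+1)\theta_a\bigr)\, .
\end{equation*}
First I would linearize each term using the identity $\sin^2(\phi) = \tfrac{1}{2}\bigl(1 - \cos(2\phi)\bigr)$ with $\phi = (2k+1)\theta_a$, which turns the sum into $\tfrac{1}{2} - \tfrac{1}{2m}\sum_{k=0}^{m-1}\cos\bigl((4k+2)\theta_a\bigr)$. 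This isolates the constant $\tfrac{1}{2}$ and leaves a single cosine sum to evaluate.

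The core computation is the sum $\sum_{k=0}^{m-1}\cos\bigl((4k+2)\theta_a\bigr)$, which is a cosine series in arithmetic progression with first angle $2\theta_a$ and common difference $4\theta_a$. The clean way to handle it is to write each cosine as the real part of $e^{i(4k+2)\theta_a}$ and sum the resulting geometric series with ratio $e^{i4\theta_a}$. Summing gives
\begin{equation*}
  \sum_{k=0}^{m-1} e^{i(4k+2)\theta_a} = e^{i2\theta_a}\,\frac{e^{i4m\theta_a}-1}{e^{i4\theta_a}-1}\, ,
\end{equation*}
and after factoring out half-angle exponentials from numerator and denominator (the standard $e^{i\alpha}-1 = 2i\,e^{i\alpha/2}\sin(\alpha/2)$ trick) the phases telescope so that the real part collapses to $\sin(4m\theta_a)\big/\bigl(2\sin(2\theta_a)\bigr)$. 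Substituting back yields $P_m = \tfrac{1}{2} - \tfrac{\sin(4m\theta_a)}{4m\sin(2\theta_a)}$, which is the claimed formula.

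For the final clause, the plan is to bound the negative term. When $m \geq 1/\sin(2\theta_a)$ we have $4m\sin(2\theta_a) \geq 4$, and since $\abs{\sin(4m\theta_a)} \leq 1$ the subtracted fraction is at most $1/4$ in absolute value, giving $P_m \geq \tfrac{1}{2} - \tfrac{1}{4} = \tfrac{1}{4}$.

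I expect the main obstacle to be purely bookkeeping rather than conceptual: one must be careful that $\sin(2\theta_a) \neq 0$ so that the geometric ratio $e^{i4\theta_a}$ is not $1$ and the division is legitimate. Since $a \geq 1$ and $a < N$ give $0 < \theta_a < \pi/2$, we have $0 < 2\theta_a < \pi$ and hence $\sin(2\theta_a) > 0$, so the closed form is well defined; the degenerate case is harmless. The only place to slip is tracking the half-angle factors when extracting the real part, so I would keep the complex-exponential form explicit through that step rather than juggling product-to-sum identities by hand.
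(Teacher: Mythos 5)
Your proposal is correct, but there is nothing in the paper to compare it against: the paper states Lemma~\ref{lemma:rit} without proof, deferring to~\cite{BBHT96}, so your argument fills a gap rather than paralleling an existing one. Your derivation is in fact the standard proof from that original reference: average $\gamma_k=\sin^2\bigl((2k+1)\theta_a\bigr)$ over $k\in\{0,\ldots,m-1\}$, linearize via $\sin^2\phi=\tfrac12\bigl(1-\cos(2\phi)\bigr)$, and evaluate $\sum_{k=0}^{m-1}\cos\bigl((4k+2)\theta_a\bigr)=\sin(4m\theta_a)\big/\bigl(2\sin(2\theta_a)\bigr)$; the only cosmetic difference is that \cite{BBHT96} quotes this cosine-sum identity directly, whereas you rederive it from the geometric series $\sum_k e^{i(4k+2)\theta_a}$, which is cleaner to verify. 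Your handling of the endgame is also right: $m\geq 1/\sin(2\theta_a)$ makes the subtracted term at most $\tfrac14$ in absolute value since $\abs{\sin(4m\theta_a)}\leq 1$, giving $P_m\geq\tfrac14$. The check that $\sin(2\theta_a)>0$ whenever $1\leq a\leq N-1$ (so the geometric ratio $e^{i4\theta_a}\neq 1$ and the division is legitimate) is exactly the degenerate-case care one should take, and it is consistent with the paper's standing assumption in Section~\ref{sec:unknown} that $1 \leq a\leq N/2$.
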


This inspires a simple trick. We start from $m=1$, and slowly but
exponentially increment $m$, so that we can reach the right region for
$m$ without spending too many unnecessary queries\footnote{This is
  reminiscent of the \emph{exponential back-off} algorithm in some
  network protocols, but the rate of increment here needs more
  vigilance.}. In the algorithm below, we assume that $a \leq
N/2$. When there are more solutions, it is easy to sample classically
to find a marked item.

\begin{algorithm}
\caption{Quantum search with number of solutions \textbf{unknown}}
\label{alg:gunknown}
\begin{algorithmic}[1]
  \REQUIRE $O_f$ with $f(x) = 1$ iff. $x\in A$. $\lambda = 6/5$. 
  
  \ENSURE $x\in A$, a marked item. 

  \STATE Initialize $m=1$. 

  \WHILE{$m \leq \sqrt N$}
  
  \STATE pick uniformly random $k \gets \{1,\dots, m\}$.

  \STATE apply $k$ times the basic Grover iteration $G$ on initial state
  $\ket{h} = \sum_{x}\frac{1}{\sqrt N}\ket{x}$.

  \STATE measure and obtain $x$. If $x\in A$, output $x$ and
  abort. Otherwise set $m \gets \lambda m$.
    
  \ENDWHILE

\end{algorithmic}
\end{algorithm}

\begin{theorem} Algorithm~\ref{alg:gunknown} finds an $x\in A$ in
  $O(\sqrt{N/a})$ expected number of iterations.
\label{thm:gunkknown}
\end{theorem}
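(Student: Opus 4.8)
The plan is to track the \emph{total} number of Grover iterations summed over all passes of the while loop and to show this expectation is governed by a geometric series whose ratio is strictly below one. Set $m_0 := 1/\sin(2\theta_a)$, the threshold from Lemma~\ref{lemma:rit} beyond which each pass succeeds with probability at least $1/4$. First I would record the elementary estimate $m_0 = O(\sqrt{N/a})$: since $\sin\theta_a = \sqrt{a/N}$ and the standing assumption $a \leq N/2$ gives $\cos\theta_a = \sqrt{1-a/N} \geq 1/\sqrt 2$, we get $\sin(2\theta_a) = 2\sin\theta_a\cos\theta_a \geq \sqrt 2\,\sqrt{a/N}$, hence $m_0 \leq \tfrac{1}{\sqrt 2}\sqrt{N/a}$. (The off-by-one between the range $\{0,\dots,m-1\}$ of the lemma and $\{1,\dots,m\}$ in the algorithm is a routine shift and does not affect the bound.)

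Next I would index the passes by $j = 0,1,2,\dots$, so that in pass $j$ the counter has value $m_j = \lambda^{j}$ and the algorithm draws $k$ uniformly from $\{1,\dots,m_j\}$; the expected number of iterations spent in pass $j$ is therefore at most $m_j$. Let $j_0$ be the smallest index with $\lambda^{j_0} \geq m_0$, so that $\lambda^{j_0} = O(m_0)$. I would then bound the expected cost by
\[
  \mathbb{E}[\text{total iterations}] \;\leq\; \sum_{j \geq 0} \Pr[\text{reach pass } j]\cdot m_j ,
\]
and split the sum into a warm-up phase $j < j_0$ and a good phase $j \geq j_0$.

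For the warm-up phase I would use the trivial bound $\Pr[\text{reach pass } j] \leq 1$, so its contribution is at most $\sum_{j=0}^{j_0-1}\lambda^{j} = \frac{\lambda^{j_0}-1}{\lambda-1} = O(m_0) = O(\sqrt{N/a})$. For the good phase I would use that every pass with $m_j \geq m_0$ fails with probability at most $3/4$ by the lemma, so reaching pass $j$ forces $j-j_0$ consecutive good-phase failures and $\Pr[\text{reach pass } j] \leq (3/4)^{\,j-j_0}$. Its contribution is then at most
\[
  \sum_{j \geq j_0} (3/4)^{\,j-j_0}\,\lambda^{j} \;=\; \lambda^{j_0}\sum_{i \geq 0}\Bigl(\tfrac{3}{4}\lambda\Bigr)^{i} \;=\; O(m_0)\sum_{i \geq 0}\Bigl(\tfrac{9}{10}\Bigr)^{i} \;=\; O(m_0) \;=\; O(\sqrt{N/a}),
\]
the series converging precisely because $\tfrac{3}{4}\lambda = \tfrac{9}{10} < 1$.

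The main obstacle, and the reason $\lambda$ cannot be chosen freely, is the tension in the good phase: the per-pass cost $m_j$ grows geometrically in $j$, so a careless bound diverges. The failure probability decays geometrically at rate $3/4$, and this beats the cost growth rate $\lambda$ exactly when $\tfrac{3}{4}\lambda < 1$, i.e. $\lambda < 4/3$. The choice $\lambda = 6/5$ sits safely in this window, which is what makes the series summable and pins the total expectation at $O(\sqrt{N/a})$.
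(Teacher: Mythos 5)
Your proof is correct and follows essentially the same route as the paper's: the same critical threshold $1/\sin(2\theta_a) = O(\sqrt{N/a})$, the same split into a pre-threshold phase bounded by a plain geometric series and a post-threshold phase where the per-pass failure probability $3/4$ from Lemma~\ref{lemma:rit} beats the cost growth rate $\lambda$, with convergence hinging on exactly the same condition $3\lambda/4 < 1$. The only cosmetic difference is that you apply linearity of expectation pass-by-pass via $\Pr[\text{reach pass } j]$, whereas the paper conditions on the number of additional loops $L$ after the threshold; the two computations are equivalent.
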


\begin{proof}
  Let $m^* = 1/{\sin(2\theta_a)}$ denote the critical point. Then
  \begin{equation*}
    m^* = \frac{1}{2\sin\theta_a\cos\theta_a} =
    \frac{1}{2\sqrt{a/N}\sqrt{1 - a/N}} = \frac{N}{2\sqrt{(N-a)a}} < \sqrt{N/a} \, ,
  \end{equation*}
  assuming $a \leq N/2$. Let $t = \lceil \log_{\lambda}m^*\rceil$ be
  the number of main loops in the algorithm needed to reach
  $m^*$. From Lemma~\ref{lemma:rit}, once we go beyond $m^*$, every
  loop will succeed with probability at least $1/4$. Thus we just need
  to count the number of Grover iterations necessary to reach the
  critical point plus the number of iterations afterwards to find a
  solution.

  \begin{itemize}
  \item To reach the critical point, the number of iterations in the
    $j$th loop is bounded by $m_j = \lambda^{j-1}$. Hence the total
    number of Grover iterations $G$ is bound by
    \begin{equation*}
      \sum_{j = 1}^{t} \lambda^{j-1}  = \frac{\lambda^{t} - 1}{ \lambda
        - 1} \leq \frac{\lambda}{\lambda - 1} m^* = 6m^*\, .
    \end{equation*}
  \item After reaching the critical point, let $X$ be the random
    variable denoting the Grover iterations needed till finding a
    solution. Let $L$ be the random variable denoting the additional
    loops till a solution is found. By Lemma~\ref{lemma:rit},
    $\Pr[L = \ell] \leq (\frac{3}{4})^{\ell-1}\cdot \frac{1}{4} \leq
    (\frac{3}{4})^{\ell}$.
    \begin{eqnarray*}
      E[X] &=& E_LE_X[X| L = \ell] \\
           &=& \sum_{\ell} E[X|L=\ell]\cdot \Pr[L=\ell] \\
           &\leq& \sum_{\ell}  (\sum_{j=1}^{\ell}
                  \lambda^{t+j}) \cdot \Pr[L=\ell]  \quad \text{(each loop runs $\leq
                  \lambda^{t+j}$ iterations)}\\
           & = & \sum_{\ell} \lambda^t \cdot \lambda \cdot
                 \frac{\lambda^{\ell} - 1}{\lambda -1} \cdot \Pr[L=\ell] \\
           &\leq& \frac{\lambda \cdot m^*}{\lambda - 1} \cdot
                  \sum_{\ell} \Pr[L = \ell] \cdot
                  {\lambda^\ell}\\
           &\leq& \frac{\lambda m^*}{\lambda - 1} \cdot
                  \sum_{\ell = 1}^\infty (\frac{3\lambda}{4})^\ell \\
           &\leq& \frac{\lambda \cdot m^*}{\lambda - 1} \cdot 
                  \frac{1}{1 - 3\lambda/4}  \quad \text{(b.c. we picked
                  $\lambda = 6/5 < 4/3$)}\\
           &\leq & 10 m^* \, .
    \end{eqnarray*}
   \end{itemize}

   Therefore the expected total number of Grover iterations is at most
   $6m^* + 10 m^* = O(m^*) = O(\sqrt{N/a})$.
\end{proof}

\begin{remark} Note that any $1<\lambda<4/3$ would work. Another
  feature of these algorithms is that the solution from measuring the
  register after appropriate number of Grover iterations is
  distributed \emph{uniformly} in the set of solutions. This is a key
  property behind some applications such as finding the
  minimum~\cite{DH96}.
\end{remark}

\section{Quantum amplitude amplification and quantum counting}
\label{sec:aaqc}

An immediate generalization leads to a general technique called
\emph{amplitude amplification}, first introduced in~\cite{BH97_exact},
but for page limitation, little details were provided. It was later
fully specified in~\cite{BHT98,BHMT02}. It is a quantum analogue of
amplifying the success probability of a randomized algorithm
classically. If we repeat independently $t$ times a randomized
algorithm that succeeds with probability $p$, then the probability
that it succeeds at least once is roughly boosted to
$1-(1-p)^t\approx tp$. Therefore we need $O(1/p)$ repetitions to
succeed with probability close to 1.

Quantum amplitude amplification takes a (classical or quantum)
subroutine $U$ that succeeds with probability $p$ (or amplitude of
magnitude $\sqrt p$), and boosts the success probability close to 1
within $O(1/{\sqrt p})$ invocations of the original subroutine. Hence
this offers a generic \emph{quadratic} speedup.

The procedure is similar to Grover's algorithm by composing two
reflections that effectively moves towards the ``good'' state. For
instance, consider a unitary operation $U$, and
$\chi: \bool^n \to \bool$. Then $\chi$ induces a partition on
$\bool^n$: call $A:=\{x\in\bool^n: \chi(x) = 1\}$ the ``good''
subspace with size $a = |A|$, and $B:=\{x\in\bool^n: \chi(x) = 0\}$
the ``bad'' subspace. Let
$\ket{A} = \frac{1}{\sqrt{|A|}}\sum_{x\in A}\ket{x}$ and
$\ket{B} = \frac{1}{\sqrt{|B|}}\sum_{x\in B}\ket{x}$. Suppose
$\ket{\psi} := U\ket{0^n} = {\sqrt p} \ket{A} +
{\sqrt{1-p}}\ket{B}$. Then define
\begin{eqnarray*}
  G:= R_A R_B  \text{ with } R_A := -U Z_0 U^* \quad \& \quad R_B :=
                                                       Z_\chi \, , 
\end{eqnarray*}
where
\begin{equation*}
  Z_0 \ket{x} :=\left\{
    \begin{array}{l l}
      -\ket{x} & \text{ if } x = 0^n\\
      \ket{x} & \text{ if } x\neq 0^n 
    \end{array} \right.
  \quad \&  \quad Z_\chi\ket{x}:= (-1)^{\chi(x)} \ket{x} \, , 
\end{equation*}
are as before. We can see that $R_A$ is a reflection about
$\ket{\psi} = U\ket{0}$, and $R_B$ is reflection about $\ket{B}$ in
the plane defined by $\{\ket{A},\ket{B}\}$, and repeated application
of $G$ on $\ket{\psi}$ will amplify the amplitude on $\ket{A}$.

\begin{remark}
  If $p$ is unknown, similar idea as in Algorithm~\ref{alg:gunknown}
  can be adapted here. Details can be found in~\cite{BHMT02}.  There
  they also described, when $p$ is known, two approaches that amplify
  the success probability to exactly one, but there is technicality
  about implementing some unitary exactly and one needs to be careful
  about the quantum gate set to work with.

  Recently, an \emph{oblivious} quantum amplitude amplification
  technique has been developed~\cite{BCCKS14,Watrous09}, which is
  applicable to a unitary with an arbitrary and \emph{unknown} input
  state $\ket{\psi}$ (rather than just $\ket{0}$).
\end{remark}


\subsection{Quantum counting \& amplitude estimation}
\label{sec:qc}

Consider the same setup as in Grover's search problem, can we find out
how many marked items are there? Namely given $O_f$, can we compute
$|A| = |f^{-1}(1)|$? This is the quantum counting problem, and it can
be solved by considering a slightly more general problem,
\emph{amplitude estimation}.

Again, consider a unitary operation $U$, and
$\chi: \bool^n \to \bool$. Then $\chi$ induces a partition on
$\bool^n$: call $A:=\{x\in\bool^n: \chi(x) = 1\}$ the ``good''
subspace with size $a = |A|$, and $B:=\{x\in\bool^n: \chi(x) = 0\}$
the ``bad'' subspace. Let
$\ket{A} = \frac{1}{\sqrt a} \sum_{x \in A} \ket{a}$ and
$\ket{B} = \frac{1}{\sqrt{N-a} } \sum_{x \in B} \ket{x} $.

  
\begin{definition}[Amplitude estimation] Let
  $\ket{\psi} = U\ket{0} = \sqrt{\alpha}\ket{A} + \sqrt{1 -
    \alpha}\ket{B}$ Estimate $\alpha$, i.e., compute $\tilde \alpha$
  such that $|\alpha - \tilde \alpha| \leq \veps$.
\end{definition}

Quantum counting is then a special case of amplitude estimation.  Let
$U = H^{\otimes n}$. Then
$\ket{h} = \frac{1}{\sqrt N}\sum_{x} \ket{x} = \sqrt{\frac{a}{N}}
\ket{A} + \sqrt{\frac{N-a}{N}} \ket{B}$. Thus $\alpha = a/N$ and a
fine estimation of $\alpha$ also gives a good approximation of $a$ -
the number of marked elements. This immediately gives an alternative
approach to searching withouth knowing the number of solutions: one
just starts off approximating the number of solutions, and proceeds
using the approximate number to decide the proper number of Grover
iterations.

So how do we solve amplitude estimation? Kitaev's \emph{phase
  estimation} technique turns out to be the bomb. The key is to
observe that the operator
\begin{equation*}
  G: = - AZ_0 A^* Z_\chi  
\end{equation*}
has eigenvectors
\begin{equation*}
  \ket{\psi_{\pm}} := \frac{1}{\sqrt 2} \left(\frac{1}{\sqrt \alpha}
    \ket{\psi_A} \pm \frac{i}{\sqrt{1-\alpha}} \ket{\psi_B}\right) \, ,
\end{equation*}
with eigenvalues $\lambda_\pm = e^{\pm i 2\theta_\alpha}$, where
$\sin^2(\theta_\alpha) = \alpha$. (Exercise: verify this.) Note that
$\ket{\psi_\pm}$ form an eigen-basis.

Kitaev's phase estimation (PE) algorithm computes an approximation of
the eigenvalue. Namely
\begin{equation*}
  \text{Input: } (U,\ket{\phi} \text{ with } U\ket{\phi} = e^{i\theta} \ket{\phi})
  \rightarrow \framebox{PE} \rightarrow \tilde \theta \approx \theta \, . 
\end{equation*}
Therefore if we can prepare any one of $\ket{\psi_\pm}$ and send it in
PE algorithm with $G$, we will be able to approximate $\theta_\alpha$
and hence $\alpha$. But how to prepare $\ket{\psi_\pm}$?

Well, we don't have to.  The trick is to note that
$U\ket{0} = \ket{\psi} = \frac{-i}{\sqrt 2} (e^{i\theta_\alpha}
\ket{\psi_+} - e^{-i\theta_\alpha}\ket{\psi_-})$ can be spanned under
the eigenvectors. Therefore if we send $\ket{\psi}$ and $G$ in the PE
algorithm, the effect will be as if measuring $\ket{\psi}$ under the
eigenbasis and then estimating the eigenvalue (phase) of the
eigenvector corresponding to the measurement outcome. Be it $\psi_+$
or $\psi_-$, we get approximation of $\alpha$ either way. Read more
details in~\cite{BHMT02}.

\section{Optimality of Grover's algorithm: an average-case lower
  bound} \label{sec:lb}

We discuss optimality of Grover's algorithm, i.e., hardness of solving
the search problem in this section. Read~\cite{BBBV97} for more
details such as the \emph{hybrid argument}.

Let $O_r$ define an instance of Grover's search problem where
$O_r(x) = 1$ iff. $x= r$. Given quantum access to $O_r$, an arbitrary
$k$-query quantum algorithm can be described as
\begin{equation*}
  U_k(O\otimes I_A) U_{k-1}\ldots U_1(O\otimes I_A) U_0
  \ket{0^n}\ket{0^{m-n}}_A \, ,
\end{equation*}
where $A$ is a register (work space) of $m-n$ auxiliary qubits, and
$U_j$ are arbitrary unitary operations on $m$ qubits. In what follows,
we abuse notation and write $O_r$ to represent $O_r\otimes I_A$ with
some implicit auxiliary system. 


\subsection{Standard lower bound proof}
\label{sec:stdlb}

We will compare two states
\begin{eqnarray*}
  \ket{\psi_r^{(k)}} &:=& U_kO_rU_{k-1}\ldots U_1O_rU_0 \ket{0^m}\\
  \ket{\phi^{(k)}} &:=& U_k I U_{k-1}\ldots U_1IU_0 \ket{0^m}
\end{eqnarray*}

Identity operator $I$ can be viewed as a unitary oracle implementing
the constant-0 function, i.e., $f(x) = 0$ for all $x\in \bool^n$.

\begin{lemma} Let $N = 2^n$. $\exists r \in \bool^n$ such that
  $\norm{\hybs{\psi_r}{k} - \hybs{\phi}{k}} \leq 2k/{\sqrt N}$.
\label{lemma:lb}  
\end{lemma}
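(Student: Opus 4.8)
The plan is to bound how much a single oracle query can perturb the algorithm's state, then sum these perturbations over all $k$ queries via a telescoping (hybrid) argument, and finally average over the secret $r$ to extract a good $r$.

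First I would set up the hybrid states. For each $j \in \{0,\ldots,k\}$ define the state that uses the real oracle $O_r$ for the first $j$ queries and the identity (constant-$0$) oracle thereafter; call it $\ket{\theta_j}$, so that $\ket{\theta_k} = \hybs{\psi_r}{k}$ and $\ket{\theta_0} = \hybs{\phi}{k}$. Since the unitaries $U_0,\ldots,U_k$ are norm-preserving, the only difference between consecutive hybrids $\ket{\theta_{j}}$ and $\ket{\theta_{j-1}}$ is whether $O_r$ or $I$ is applied at the $j$th query position, acting on the \emph{same} intermediate state $\ket{\xi_{j-1}}$ reached just before that query. Because $O_r$ and $I$ agree on every basis vector except $\ket{r}$ (where $O_r$ flips the sign), the difference $(O_r - I)\ket{\xi_{j-1}}$ has norm $2\abs{\bret{r}{\xi_{j-1}}}$, where $\bret{r}{\xi_{j-1}}$ is the amplitude on $\ket{r}$ in the query register. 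Applying the remaining unitaries preserves this norm, so $\norm{\ket{\theta_j} - \ket{\theta_{j-1}}} = 2\abs{\bret{r}{\xi_{j-1}^{(r)}}}$.

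Next I would apply the triangle inequality across the telescoping sum,
\begin{equation*}
  \norm{\hybs{\psi_r}{k} - \hybs{\phi}{k}} \leq \sum_{j=1}^{k} 2\abs{\bret{r}{\xi_{j-1}^{(r)}}} \, ,
\end{equation*}
and then average over $r$. The crucial observation is that, since we want a \emph{single} good $r$, it suffices to show the average over $r \in \bool^n$ of the right-hand side is at most $2k/\sqrt N$; then some $r$ achieves at most the average. By Cauchy--Schwarz (or concavity of the square root), for each fixed $j$ we have $\frac{1}{N}\sum_r \abs{\bret{r}{\xi_{j-1}^{(r)}}} \leq \sqrt{\frac{1}{N}\sum_r \abs{\bret{r}{\xi_{j-1}^{(r)}}}^2}$. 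If the intermediate states were $r$-independent, the inner sum would be exactly $1$ by normalization, giving the bound $1/\sqrt N$ per query and $2k/\sqrt N$ overall after restoring the factor of $2$ and summing over $j$.

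The hard part will be handling the $r$-dependence of the intermediate state $\ket{\xi_{j-1}^{(r)}}$: the state reached before the $j$th query already depends on $r$ through the earlier queries, so $\sum_r \abs{\bret{r}{\xi_{j-1}^{(r)}}}^2$ is not immediately the squared norm of one fixed state. I would address this by instead comparing each $\ket{\xi_{j-1}^{(r)}}$ to the corresponding $r$-independent state obtained by running the identity oracle throughout, i.e. the prefix of $\hybs{\phi}{k}$; one can either argue inductively that replacing the $r$-dependent query states by the common identity-oracle states only changes the sum by lower-order terms, or more cleanly define the hybrids so that the state on which the $j$th query acts is exactly the identity-evolved state $U_{j-1}\cdots U_0\ket{0^m}$ (which is independent of $r$), making $\sum_r \abs{\bret{r}{\xi_{j-1}}}^2 = \norm{\xi_{j-1}}^2 = 1$ hold exactly. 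With that choice the averaging step goes through verbatim and yields the claimed $2k/\sqrt N$ bound, from which the existence of a good $r$ follows immediately.
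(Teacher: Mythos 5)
Your proposal is correct and, once you adopt your ``cleaner'' hybrid definition in which the differing query always acts on the $r$-independent identity-evolved state, it coincides with the paper's argument: the paper's error terms $E_r^j = \norm{O_r\hybs{\phi}{j} - \hybs{\phi}{j}} \leq 2\norm{\Pi_r\hybs{\phi}{j}}$ are exactly your per-query perturbations measured on the identity-run prefix states, and the Cauchy--Schwarz-plus-averaging step (using $\sum_r \norm{\Pi_r\hybs{\phi}{j}}^2 = 1$) is identical. Note that your first hybrid setup, where the $j$th query acts on the state evolved under the \emph{real} oracle, suffers from precisely the $r$-dependence obstacle you flag, so the proof rests entirely on the second variant---which is the right one and is what the paper does.
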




These immediately give hardness of the unstructured search problem
(and hence optimality of Grover's algorithm).

\begin{theorem} Any algorithm needs $\Omega(\sqrt N)$ queries to $O_r$
  in order to find $r$ with constant probability.
\label{thm:stdlb}
\end{theorem}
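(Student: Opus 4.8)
The plan is to prove the theorem directly from Lemma~\ref{lemma:lb} by a contradiction argument built on the hybrid distance bound. The key observation is that Lemma~\ref{lemma:lb} guarantees the existence of some $r$ for which the algorithm's state on input $O_r$ is very close (in trace norm) to its state on the all-zero oracle $I$, provided the number of queries $k$ is small. Since the state $\hybs{\phi}{k}$ on the constant-$0$ oracle carries no information about $r$, the algorithm cannot reliably output $r$ unless the two states are far apart. The plan is to quantify this via the standard fact that if two pure states satisfy $\norm{\hybs{\psi_r}{k} - \hybs{\phi}{k}} \leq \delta$, then any measurement produces outcomes whose probabilities differ by at most $O(\delta)$.

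First I would suppose for contradiction that there is a $k$-query algorithm that finds $r$ with probability at least some constant $c > 0$ for \emph{every} $r$. Applying Lemma~\ref{lemma:lb}, I fix the special $r$ it provides, so that $\norm{\hybs{\psi_r}{k} - \hybs{\phi}{k}} \leq 2k/\sqrt{N}$. Next I would bound the probability that the algorithm, run on the oracle $I$ (producing state $\hybs{\phi}{k}$), outputs that particular $r$: since $\hybs{\phi}{k}$ is independent of $r$ and there are $N$ possible values, by an averaging argument the probability of outputting this specific $r$ is at most $1/N$ (or I could avoid invoking averaging and instead directly compare the two output distributions). Then I would use the closeness of the states to transfer this: the probability that the $O_r$-run outputs $r$ differs from the probability that the $I$-run outputs $r$ by at most the trace distance, which is controlled by $\norm{\hybs{\psi_r}{k} - \hybs{\phi}{k}} \leq 2k/\sqrt{N}$.

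Combining these, I would obtain an inequality of the form $c \leq 1/N + 2k/\sqrt{N}$, and hence $k \geq \Omega(\sqrt{N})$ once $c$ is a fixed constant and $N$ is large. This yields the claimed $\Omega(\sqrt{N})$ lower bound. To make the averaging step rigorous, I would note that whatever final measurement the algorithm performs, the probabilities of outputting each candidate $r' \in \bool^n$ under the fixed state $\hybs{\phi}{k}$ sum to at most $1$, so at least one $r$ has output probability $\leq 1/N$; I would then arrange for Lemma~\ref{lemma:lb} to supply an $r$ that is simultaneously good for both the closeness bound and the small-output-probability bound (or argue that the lemma's averaging already selects a typical such $r$).

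The main obstacle I anticipate is making the two selection criteria compatible: Lemma~\ref{lemma:lb} guarantees \emph{some} $r$ with small state-distance, but the contradiction also wants an $r$ whose $I$-run output probability is small. The cleanest fix is to average the success probability over a uniformly random $r$: the expected closeness bound and the expected $I$-output probability $1/N$ both hold on average, so a single averaging over $r$ simultaneously controls both quantities, and the existence of a good $r$ follows without needing the two events to be forced onto the same index by hand. This averaging formulation is also exactly what makes the bound \emph{average-case}, matching the paper's stated strengthening.
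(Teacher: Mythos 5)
Your proposal is correct, and it is in fact more careful than the paper's own proof of this theorem, which is a two-sentence sketch (``the algorithm needs to distinguish $\hybs{\psi_r}{k}$ from $\hybs{\phi}{k}$, therefore the norm must exceed a constant'') that explicitly defers precision to the next section. The obstacle you flag is genuine and is exactly what the sketch glosses over: Lemma~\ref{lemma:lb} hands you one particular $r$ with small state distance, but the contradiction also needs that \emph{same} $r$ to have small output probability under the oracle-independent state $\hybs{\phi}{k}$, and nothing forces the two choices to coincide. Your averaging fix is precisely how the paper resolves this in Section~\ref{sec:avglb} (Theorem~\ref{cor:groverlba}): bound $\Pr[A^{O_r}\text{ outputs } r] \leq \Pr[A^{I}\text{ outputs } r] + \td{\hybs{\psi_r}{k}}{\hybs{\phi}{k}}$ via the Holevo--Helstrom bound together with Eqn.~\ref{eqn:tdnorm}, average over $r$, use $\sum_r \Pr[A^{I}\text{ outputs } r] \leq 1$, and conclude $c \leq 1/N + 2k/\sqrt{N}$, i.e., $k = \Omega(\sqrt N)$. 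One point to make explicit when writing this up: the averaging step requires the summed bound $\sum_r D_r^k \leq 2k\sqrt{N}$, which is Claim~\ref{claim:sumd} inside the \emph{proof} of Lemma~\ref{lemma:lb}, not the lemma's existential statement itself; cite (or re-derive) that claim rather than the lemma as stated, since the single $r$ produced by the lemma cannot be fed into the averaging argument. With that citation in place your argument is complete, and as a bonus it establishes the stronger average-case hardness statement that the paper presents as its final theorem.
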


\begin{proof}
  For an arbitrary $k$-query algorithm $A$, it needs to be able to
  distinguish $\hybs{\psi_r}{k}$ from $\hybs{\phi}{k}$ (this will made
  more precise in the next section). Therefore
  $ \norm{\hybs{\psi_r}{k} - \hybs{\phi}{k}}$ has to be bigger than
  some constant, which implies that $k$ needs to be at least
  $\Omega(\sqrt N)$.
\end{proof}

Now let's prove Lemma~\ref{lemma:lb} by the elegant \emph{hybrid
  argument}.
\begin{proof}[Proof of Lemma~\ref{lemma:lb}]
  Introduce two sequences of intermediate states:
 \begin{eqnarray*}
   \hybs{\psi_r}{0} = U_0 \ket{0^m}, &\quad&    \hybs{\phi}{0} = U_0
                                             \ket{0^m} \\
   \hybs{\psi_r}{1} = U_1O_r \hybs{\psi_r}{0}, &\quad&
                                                       \hybs{\phi}{1} = U_1I \hybs{\phi}{0} \\
                                     &\vdots&\\
   \hybs{\psi_r}{j+1} = U_{j+1}O_r \hybs{\psi_r}{j}, &\quad&    \hybs{\phi}{j+1} = U_{j+1}I \hybs{\phi}{j} \\
                                     &\vdots&\\
   \hybs{\psi_r}{k} = U_kO_r \hybs{\psi_r}{k-1}, &\quad&    \hybs{\phi}{k} = U_kI \hybs{\phi}{k-1} \\    
  \end{eqnarray*}

Define:
\begin{equation*}
  D_r^j := \norm{\hybs{\psi_r}{j} - \hybs{\phi}{j}}, \quad  E_r^j := \norm{O_r\hybs{\phi}{j} - \hybs{\phi}{j}} \, .  
\end{equation*}
Let $\Pi_r = \kera{r}$ be the projection on the basis $\ket{r}$ (more
precisely $\Pi_r = \kera{r}\otimes I_A$ is identity on the remaining
working space).
\begin{claim} For $j = 0,\ldots, k-1$,
  $D_r^{j+1} \leq D_r^{j} + E_r^{j}$, and
  $E_r^j \leq 2 \norm{\Pi_r \hybs{\phi}{j}}$.
  \label{claim:de}
\end{claim}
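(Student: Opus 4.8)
The plan is to prove the two inequalities separately, both by elementary manipulations that exploit the unitarity of $U_{j+1}$ and $O_r$ together with the triangle inequality. Neither bound requires anything beyond the definitions; the only maneuvers are to insert a well-chosen intermediate vector before invoking the triangle inequality for the first part, and to recognize the explicit form of the phase oracle $O_r$ for the second.

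For the first inequality $D_r^{j+1} \leq D_r^{j} + E_r^{j}$, I would start from the recurrences $\hybs{\psi_r}{j+1} = U_{j+1} O_r \hybs{\psi_r}{j}$ and $\hybs{\phi}{j+1} = U_{j+1} I \hybs{\phi}{j} = U_{j+1}\hybs{\phi}{j}$. Since $U_{j+1}$ is unitary and preserves the norm, it can be peeled off, leaving $D_r^{j+1} = \norm{O_r \hybs{\psi_r}{j} - \hybs{\phi}{j}}$. I then insert the term $O_r \hybs{\phi}{j}$ and split via the triangle inequality: the first piece $\norm{O_r(\hybs{\psi_r}{j} - \hybs{\phi}{j})} = \norm{\hybs{\psi_r}{j} - \hybs{\phi}{j}} = D_r^{j}$ by unitarity of $O_r$, while the second piece is exactly $\norm{O_r\hybs{\phi}{j} - \hybs{\phi}{j}} = E_r^{j}$. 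Summing the two gives the claim.

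For the second inequality, the key observation is that $O_r$ is the phase oracle for the indicator $\delta_r$, so it acts as $O_r\ket{x} = (-1)^{[x=r]}\ket{x}$ and hence $O_r = I - 2\Pi_r$. Consequently $O_r - I = -2\Pi_r$, and a direct computation gives $E_r^{j} = \norm{(O_r - I)\hybs{\phi}{j}} = 2\norm{\Pi_r \hybs{\phi}{j}}$. Thus the stated bound in fact holds with equality, which is what makes the downstream telescoping of Lemma~\ref{lemma:lb} tight.

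There is no substantial obstacle here, and this is precisely what makes the hybrid argument so clean: both bounds are tight and entirely self-contained. The one point requiring care is the sign and normalization convention for the oracle. One must use the phase-oracle form $O_r = I - 2\Pi_r$ (consistent with the $Z_f$ convention fixed earlier in the paper) rather than the XOR-oracle $O_f$, so that the identity $O_r - I = -2\Pi_r$ holds exactly and the factor of $2$ in the second inequality emerges cleanly.
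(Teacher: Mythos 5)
Your proof of the first inequality is essentially the paper's own: peel off or keep the unitary $U_{j+1}$, insert the intermediate vector $O_r\hybs{\phi}{j}$, apply the triangle inequality, and use unitarity of $O_r$; the order in which you invoke unitarity versus insertion is immaterial.

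For the second inequality you take a genuinely different, and narrower, route. The paper never commits to an explicit form of the oracle: it splits $\hybs{\phi}{j} = \Pi_r\hybs{\phi}{j} + (I - \Pi_r)\hybs{\phi}{j}$, uses only the property $O_r(I-\Pi_r) = I - \Pi_r$ (the oracle fixes unmarked inputs) to reduce the difference to $O_r\Pi_r\hybs{\phi}{j} - \Pi_r\hybs{\phi}{j}$, and then gets the factor $2$ from the triangle inequality plus norm preservation. That argument is valid for \emph{any} unitary oracle fixing the unmarked subspace, in particular for the XOR oracle $O_f$ of Section~\ref{sec:msearch} with the answer qubit absorbed into the work register, which is a plausible reading of the query model $U_k(O\otimes I_A)\cdots U_0$ in Section~\ref{sec:lb}. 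Your argument instead pins down the phase-oracle form $O_r = I - 2\Pi_r$ and concludes $E_r^j = 2\norm{\Pi_r\hybs{\phi}{j}}$ with equality. This is correct under that convention, and you flag the convention explicitly, which is the right instinct; but note the equality claim is genuinely tied to it. For the XOR oracle one has $O_r \neq I - 2\Pi_r$, and equality can fail outright: if the answer qubit happens to be in the state $(\ket{0}+\ket{1})/\sqrt{2}$ on the marked branch, then $(O_r - I)\Pi_r\hybs{\phi}{j} = 0$ while $\Pi_r\hybs{\phi}{j}$ need not vanish, so only the inequality survives. In short, the paper's proof buys robustness to the oracle convention, while yours buys a tightness statement that Lemma~\ref{lemma:lb} and Claim~\ref{claim:sumd} never actually need. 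A small point in your favor: your indices are consistent, whereas the paper's displayed computation of $E_r^j$ slips between $\hybs{\phi}{j}$ and $\hybs{\phi}{j-1}$, evidently a typo.
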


\begin{proof}[Proof of Claim~\ref{claim:de}]
\begin{eqnarray*}
  D_r^{j+1} &=& \norm{ \hybs{\psi_r}{j+1} - \hybs{\phi}{j+1}}\\
            &=& \norm{U_{j+1}O_r \hybs{\psi_r}{j} - U_{j+1}I\hybs{\phi}{j}}\\
            &=& \norm{U_{j+1}O_r \hybs{\psi_r}{j} - U_{j+1}O_r \hybs{\phi}{j} 
                + U_{j+1}O_r \hybs{\phi}{j} - U_{j+1} \hybs{\phi}{j}}\\
            &\leq& \norm{U_{j+1}O_r(\hybs{\psi_r}{j} - \hybs{\phi}{j})} \\ 
            & + & \norm{U_{j+1}(O_r\hybs{\phi}{j} - \hybs{\phi}{j})} \quad \text{triangle-inequality}\\
            &=&  \norm{\hybs{\psi_r}{j} - \hybs{\phi}{j}} +
                \norm{O_r\hybs{\phi}{j} - \hybs{\phi}{j}} \quad
                \text{unitary preserves norm} \\
            & = & D_r^{j} + E_r^{j} \, .
\end{eqnarray*}

For the second part, note that
\begin{eqnarray*}
  O_r\hybs{\phi}{j}  &=& O_r(\Pi_r+(I - \Pi_r)) \hybs{\phi}{j-1}\\
                     &=& O_r\Pi_r \hybs{\phi}{j-1} + (I-\Pi_r) \hybs{\phi}{j-1}
                         \quad \text{(Note: $O_r(I-\Pi_r) = I - \Pi_r$)} \, ; \\
  \hybs{\phi}{j-1} &=& \Pi_r\hybs{\phi}{j-1} + (I -
                       \Pi_r)\hybs{\phi}{j-1} \, .
\end{eqnarray*}

Therefore
\begin{equation*}
  E_r^j  = \norm{ O_r\Pi_r \hybs{\phi}{j-1} - \Pi_r\hybs{\phi}{j-1}}
  \leq \norm{O_r\Pi_r \hybs{\phi}{j-1}}+ \norm{\Pi_r\hybs{\phi}{j-1}} =
  2 \norm{\Pi_r\hybs{\phi}{j-1}} \, .
\end{equation*}

\end{proof}

Back to proving the Lemma, we have 
$ \norm{ \hybs{\psi_r}{k} - \hybs{\phi}{k}} = D_r^k \leq E_r^{k-1} +
\ldots + E_r^{0} = \sum_{j=0}^{k-1} E_r^{j} \, .$

\begin{claim} 
  $\sum_{r\in \bool^n}D_r^k \leq 2k\sqrt N$.
  \label{claim:sumd}
\end{claim}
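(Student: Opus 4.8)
The plan is to combine the two inequalities already established in Claim~\ref{claim:de} and then exploit the crucial feature that the reference states $\hybs{\phi}{j}$ are \emph{independent of $r$}: they are generated using the identity (constant-$0$) oracle, so the very same fixed trajectory of states appears for every choice of marked item $r$. First I would invoke the bound proved just above the claim, namely $D_r^k \le \sum_{j=0}^{k-1} E_r^j$, together with $E_r^j \le 2\norm{\Pi_r\hybs{\phi}{j}}$ from Claim~\ref{claim:de}, to record for each $r$ that
\[
D_r^k \;\le\; \sum_{j=0}^{k-1} E_r^j \;\le\; 2\sum_{j=0}^{k-1}\norm{\Pi_r\hybs{\phi}{j}} \, .
\]
Summing over all $r\in\bool^n$ and interchanging the two finite sums gives
\[
\sum_{r\in\bool^n} D_r^k \;\le\; 2\sum_{j=0}^{k-1}\Bigl(\sum_{r\in\bool^n}\norm{\Pi_r\hybs{\phi}{j}}\Bigr) \, ,
\]
so the whole problem reduces to bounding, for a single fixed unit vector $\hybs{\phi}{j}$, the quantity $\sum_{r}\norm{\Pi_r\hybs{\phi}{j}}$.

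Next I would handle this inner sum for an arbitrary fixed $j$. The projectors $\Pi_r = \kera{r}\otimes I_A$ are mutually orthogonal and sum to the identity on the query register, so $\sum_{r\in\bool^n}\norm{\Pi_r\hybs{\phi}{j}}^2 = \norm{\hybs{\phi}{j}}^2 = 1$. The task is therefore to bound the $\ell_1$-sum of the $N$ nonnegative numbers $\norm{\Pi_r\hybs{\phi}{j}}$ whose squares sum to $1$, and this is exactly where the factor $\sqrt N$ enters: by Cauchy--Schwarz,
\[
\sum_{r\in\bool^n}\norm{\Pi_r\hybs{\phi}{j}}
\;\le\; \Bigl(\sum_{r\in\bool^n} 1\Bigr)^{1/2}\Bigl(\sum_{r\in\bool^n}\norm{\Pi_r\hybs{\phi}{j}}^2\Bigr)^{1/2}
\;=\; \sqrt N \, .
\]
Substituting this bound, valid uniformly for each of the $k$ values of $j$, into the previous display yields $\sum_{r} D_r^k \le 2\sum_{j=0}^{k-1}\sqrt N = 2k\sqrt N$, which is the claim. (For completeness, Lemma~\ref{lemma:lb} then follows by a one-line averaging argument: since the $N$ nonnegative numbers $D_r^k$ sum to at most $2k\sqrt N$, their minimum is at most the average $2k\sqrt N / N = 2k/\sqrt N$, giving the required $r$.)

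As for the main obstacle: there is essentially a single genuine idea here, the Cauchy--Schwarz step, and the subtlety worth emphasizing is \emph{why} summing the squares to $1$ is legitimate. It works precisely because $\hybs{\phi}{j}$ does not depend on $r$, so for each fixed $j$ we are decomposing one fixed unit vector across the orthogonal query-coordinate subspaces $\{\Pi_r\}$. Had one instead tried to control the $r$-dependent quantity $\sum_r\norm{\Pi_r\hybs{\psi_r}{j}}$, the normalization identity would break and no uniform $\sqrt N$ bound would be available; recognizing that the hybrid argument routes the entire sum through the single $r$-independent reference trajectory $\hybs{\phi}{j}$ is the conceptual crux, and everything else is a routine interchange of finite sums.
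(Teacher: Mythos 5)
Your proof is correct and follows essentially the same route as the paper's: both reduce $D_r^k$ to $\sum_j E_r^j$, interchange the sums over $r$ and $j$, and apply Cauchy--Schwarz together with the normalization $\sum_r \norm{\Pi_r\hybs{\phi}{j}}^2 = \norm{\hybs{\phi}{j}}^2 = 1$, which holds precisely because $\hybs{\phi}{j}$ is independent of $r$. The only cosmetic difference is that you invoke $E_r^j \le 2\norm{\Pi_r\hybs{\phi}{j}}$ before Cauchy--Schwarz while the paper applies Cauchy--Schwarz to $\sum_r E_r^j$ first; your write-up also cleanly fixes a harmless $j$ versus $j-1$ indexing slip present in the paper's display.
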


\begin{proof}[Proof of Claim~\ref{claim:sumd}]

\begin{eqnarray*}
  \sum_{r\in\bool^n} D_r^k=\sum_r \sum_j E_r^j &=& \sum_j \sum_r E_r^j \\
                                               &\leq& \sum_j \sqrt{N
                                                      \sum_{r}
                                                      (E_r^j)^2} \quad \text{(Cauchy-Schwarz)} \\
                                               &\leq& 2 \sqrt N \sum_j
                                                      \sum_{r} \norm{\Pi_r
                                                      \hybs{\phi}{j-1}}^2
                                                      = 2 \sqrt{N}
                                                      \sum_{j=0}^{k-1}
                                                      \norm{\hybs{\phi}{j-1}}^2 \\
                                               &=& 2 \sqrt{N}
                                                   \sum_{j=0}^{k-1} 
                                                   1 = 2k \sqrt{N} \, .
\end{eqnarray*}

\end{proof}
Therefore there must be at least one $r^*$ such that
$D_{r^*}^k \leq 2k/{\sqrt N}$, because otherwise the
$\sum_r D_r^k > N \cdot 2k/{\sqrt N} = 2k \sqrt N$.
\end{proof}

\subsection{Stronger lower bound: average-case hardness}
\label{sec:avglb}
Note that the above proof actually holds for the \emph{decision}
version: for some $r$, distinguishing $O_r$ from constant-0 function
is hard. Namely, deciding if an oracle contains a marked item is
already hard, which of course implies that finding a solution is at
least as hard. In fact, we can further observe something stronger. Let
us introduce some more basic notions to make a formal statement.

Recall the trace distance
$\td{\rho}{\sigma}: = \frac{1}{2}\norm{\rho - \sigma }_1 =
\frac{1}{2}Tr(\sqrt{(\rho-\sigma)^*(\rho - \sigma)})$, where
$Tr(\cdot)$ computes the trace of a matrix. For two pure states
$\ket{\psi}$ and $\ket{\phi}$, it is easy to verify that
\begin{equation}
  \td{\ket{\psi}}{\ket{\phi}} = \sqrt{1 - |\bret{\psi}{\phi}|^2} \leq
  \norm{\ket{\psi} - \ket{\phi}}\, .
  \label{eqn:tdnorm}
\end{equation}

Consider two oracles $O$ and $O'$. Let $A^O(\cdot)$ denote an
algorithm $A$ that makes queries to $O$ and finally output one bit. We
define the distinguishing \emph{advantage} of an algorithm $A$ trying
to tell apart $O$ and $O'$:
\begin{equation*}
  \adv_A^{O,O'} := \lket \Pr[A^{O}(\cdot)=1] - \Pr[A^{O'}(\cdot) =1] \rbra \, .
\end{equation*}
We will be interested in distinguishing the oracle $O_r$ and identity
$I$ (i.e., constant-0 function).

Consider the process that a random $r$ is chosen, and then $O_r$ is
given to an algorithm $A$. The goal is to find $r$, the marked
element. We show that this is hard.
\begin{theorem} For any $k$-query algorithm $A$,
\begin{equation*}
  \adv_A^{O,I} : = \lket\Pr_{r\gets\bool^n}[A^{O_r}(\cdot) =1] -
  \Pr[A^{I}(\cdot) =1]\rbra  \leq 2k/{\sqrt N} \, .
\end{equation*}
Therefore one needs at least $\Omega(\sqrt N)$ queries to find the
marked element.
\label{cor:groverlba}
\end{theorem}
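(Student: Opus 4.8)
The plan is to read off the average-case bound directly from the summed estimate already established in Claim~\ref{claim:sumd}, rather than from the worst-case existence statement of Lemma~\ref{lemma:lb}. Recall that $D_r^k = \norm{\hybs{\psi_r}{k} - \hybs{\phi}{k}}$ is the norm distance between the pre-measurement states produced by $A$ under the oracle $O_r$ and under the identity oracle. Claim~\ref{claim:sumd} gives $\sum_{r\in\bool^n} D_r^k \leq 2k\sqrt N$, so dividing by $N$ yields $\frac{1}{N}\sum_r D_r^k \leq 2k/\sqrt N$. The remaining task is to show that the distinguishing advantage is controlled by this averaged state distance.

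First I would treat a single fixed $r$. The algorithm's one-bit output is obtained by a binary measurement applied either to $\hybs{\psi_r}{k}$ (under $O_r$) or to $\hybs{\phi}{k}$ (under $I$). Since the optimal distinguishing bias between two states under any measurement equals their trace distance, and since the subsequent classical processing of the outcome cannot increase it, I get $\adv_A^{O_r,I} \leq \td{\hybs{\psi_r}{k}}{\hybs{\phi}{k}}$, and then~\eqref{eqn:tdnorm} bounds this by $D_r^k$. Second, I would average over $r$. Because $\Pr[A^I(\cdot)=1]$ does not depend on $r$, the quantity inside the absolute value equals $\frac{1}{N}\sum_r\big(\Pr[A^{O_r}(\cdot)=1]-\Pr[A^I(\cdot)=1]\big)$, and the triangle inequality lets me pull the absolute value inside the sum. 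This gives $\adv_A^{O,I} \leq \frac{1}{N}\sum_r \adv_A^{O_r,I} \leq \frac{1}{N}\sum_r D_r^k \leq 2k/\sqrt N$, the claimed bound.

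For the $\Omega(\sqrt N)$ consequence I would use a search-to-decision reduction. Suppose $A$ outputs the marked element with probability at least a constant $p$, averaged over random $r$. Build a distinguisher $B$ that runs $A$, obtains a candidate $x$, and spends one extra query to test whether the oracle marks $x$, outputting $1$ iff the test succeeds. Under $O_r$, $B$ outputs $1$ exactly when $A$ is correct, hence with probability at least $p$; under $I$ no element is ever marked, so $B$ outputs $1$ with probability $0$. Thus $B$ is a $(k+1)$-query distinguisher of advantage at least $p$, and the advantage bound forces $p \leq 2(k+1)/\sqrt N$. A constant $p$ therefore requires $k = \Omega(\sqrt N)$.

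The main obstacle is conceptual rather than computational, and it is exactly the point that separates this average-case statement from the usual worst-case one: $\adv_A^{O,I}$ measures the distinguishability of the \emph{averaged} output probability, which is not the average of the per-instance distinguishabilities. The decisive move is that $\Pr[A^I(\cdot)=1]$ is constant in $r$, so the triangle inequality converts the former into a bound by the latter at no loss; it is precisely this step that lets the summed estimate of Claim~\ref{claim:sumd} feed directly into an average-case hardness bound. I would also be careful to justify the bridge from states to outputs, namely that post-measurement classical processing cannot increase the advantage beyond the trace distance of the underlying states (contractivity of trace distance), since that is what connects the state-distance estimate to the operational advantage.
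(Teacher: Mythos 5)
Your proof of the advantage bound is essentially identical to the paper's: both bound the per-instance advantage by the trace distance of the final states (Holevo--Helstrom), pass to the norm distance via Eqn.~\eqref{eqn:tdnorm}, exploit that $\Pr[A^I(\cdot)=1]$ is independent of $r$ to pull the absolute value inside the sum, and finish with Claim~\ref{claim:sumd}. Your explicit search-to-decision reduction (testing $A$'s candidate with one extra query) is a correct filling-in of the final ``$\Omega(\sqrt N)$ queries to find $r$'' step, which the paper asserts without spelling out.
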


This theorem is strong in a couple of aspects.

\begin{itemize}
\item It holds for a random Grover oracle, not just in the
  worst-case. This theorem also explicitly refers to the
  \emph{decision} version of unstructured search problem. It follows
  that, for a \emph{uniformly} random chosen $r$, finding $r$ is as
  hard as the worst case.
\item We bound the success probability of any algorithm with certain
  number of queries. This is most relevant in the cryptographic
  setting, since even a small (e.g. inverse polynomial) winning
  probability matters. Usually in the literature, one only cares about
  the hardness for solving the problem with constant probability.
\end{itemize}

Exercise: prove the classical lower bound for solving this
average-case (under uniform distribution) search problem.

\begin{proof}
  
  Let $p_r: = \Pr[A^{O_r}(\cdot) =1]$ and $q:=\Pr[A^{I}(\cdot) =1]|$.
  By Holevo-Helstrom theorem~\cite{Hel67,Hol72}, we know
  that~\footnote{C.f. Theorem 3.4. of
    \url{https://cs.uwaterloo.ca/~watrous/TQI/TQI.3.pdf}, and set
    $\lambda = 1/2$.}
  \begin{equation}
    \adv_A^{O_r,I} = |\Pr[A^{O_r}(\cdot)=1] - \Pr[A^{I}(\cdot) =1]| \leq
    \td{\hybs{\psi_r}{k}}{\hybs{\phi}{k}}\, .
    \label{eqn:adv}
  \end{equation}
  Therefore we have that 
\begin{eqnarray*}
  \adv^{O,I} &=& \lket\Pr_{r\gets\bool^n}[A^{O_r}(\cdot) =1] -
                 \Pr[A^{I}(\cdot) =1]\rbra \\
             &=& \lket \sum_r(\Pr[A^{O_r}(\cdot) =1|r] \cdot\Pr_{r\gets
                 \bool^n}[r]) -  \Pr[A^{I}(\cdot) =1]\rbra \\
             & = & \lket \sum_r p_r\cdot \frac 1 N - q \rbra =
                   \frac{1}{N} \lket \sum_r(p_r - q)\rbra \\
             &\leq& \frac{1}{N}\sum_r |p_r - q|\\
             &\leq & \frac{1}{N}\sum_r
                     \td{\hybs{\psi_r}{k}}{\hybs{\phi}{k}} \quad
                     \text{(Eqn.~\ref{eqn:adv})} \\
             &\leq& \frac{1}{N}\sum_r
                    \norm{{\hybs{\psi_r}{k}} - {\hybs{\phi}{k}}} \quad
                    \text{(Eqn.~\ref{eqn:tdnorm})}\\
             &=& \frac{1}{N}\sum_r D_r^k \leq \frac{2k}{\sqrt N} \quad
                 \text(Claim~\ref{claim:sumd})\, .
\end{eqnarray*}
\end{proof}

Exercise: can you extend everything here to the case of multiple
solutions?

\newpage
\small{
\bibliographystyle{alpha}
\bibliography{grover}
}



\end{document}